\newcommand*{\rom}[1]{\expandafter\@slowromancap\romannumeral #1@}
\newtheorem{remark}{Remark}
\newtheorem{claim}{Claim}
\newtheorem{theorem}{Theorem}
\newtheorem{lemma}{Lemma}
\DeclareFontFamily{U}{mathx}{\hyphenchar\font45}
\DeclareFontShape{U}{mathx}{m}{n}{
	<5> <6> <7> <8> <9> <10>
	<10.95> <12> <14.4> <17.28> <20.74> <24.88>
	mathx10
}{}
\begin{document}
	\title{A Secure Key Sharing Algorithm Exploiting Phase Reciprocity in Wireless Channels}
	
	\author{Shayan Mohajer Hamidi,~\IEEEmembership{Student Member,~IEEE}, Amir Keyvan Khandani,~\IEEEmembership{Member,~IEEE}, and Ehsan Bateni}

\markboth{}%
{Submitted paper}
	\maketitle
	\begin{abstract}
This article presents a secure key exchange algorithm that exploits reciprocity in wireless channels to share a secret key between two nodes $A$ and $B$. Reciprocity implies that the channel phases in the links $A\rightarrow B$ and $B\rightarrow A$ are the same. A number of such reciprocal phase values are measured at nodes $A$ and $B$, called shared phase values hereafter. Each shared phase value is used to mask points of a Phase Shift Keying (PSK) constellation. Masking is achieved by rotating each PSK constellation with a shared phase value. Rotation of constellation is equivalent to adding phases modulo-$2\pi$, and as the channel phase is uniformly distributed in $[0,2\pi)$, the result of summation conveys zero information about summands. To enlarge the key size over a static or slow fading channel, the Radio Frequency (RF) propagation path is perturbed to create several independent realizations of multi-path fading, each used to share a new phase value. To eavesdrop a phase value shared in this manner, the Eavesdropper (Eve) will always face an under-determined system of linear equations which will not reveal any useful information about its actual solution value. This property is used to establish a secure key between two legitimate users.
	\end{abstract}
	\begin{IEEEkeywords}
		Physical layer security, Channel reciprocity, secret key generation, 
	\end{IEEEkeywords}
	\IEEEpeerreviewmaketitle
	\section{introduction} \label{intro}
	\IEEEPARstart{T}{he} inherent broadcast nature of wireless communication makes it vulnerable to different types of attacks such as Eavesdropping. Traditionally, the data is secured by using classic encryption schemes, which entail functions and algorithms that provide a desired level of secrecy for communicating data. Most commonly used encryption methods rely on the computational hardness of some mathematical problems, {\it e.g.}, discrete logarithm. Nevertheless, due to continual advances in computer technology, and the discovery of new computational techniques such as Quantum computing, eavesdroppers are becoming more equipped and intelligent, and thus long-term effectiveness of such traditional techniques is questionable. As a result, there is a need for an extra level of secrecy to strengthen the security of communicating data in wireless communication.  

To overcome these challenges, in addition to traditional cryptographic techniques---which are applied at the third or higher layers of communication protocols---, Physical Layer Security (PLS) could also be exploited to improve the security of wireless channels \cite{shiu2011physical}. Most PLS schemes are based on the idea of leveraging intrinsic randomness in radio channel parameters \cite{maurer1999unconditionally,maurer1993secret,wilson2007channel,ye2007secrecy}, and are divided into key-less and key-based methods \cite{mukherjee2014principles}.
	\subsection{Key-less PLS}
	This type of secrecy does not require a key for encrypting data, yet it uses the channel properties of legitimate users and eavesdroppers to achieve secrecy \cite{mukherjee2014principles}. Indeed, this class of techniques, pioneered by Shannon \cite{shannon1949communication}, secures communication from an information-theoretic perspective. Shannon introduced a noiseless cipher system which is based on transmission limits subject to both reliability and secrecy. Later, Wyner considered noisy communication for the Shannon's cipher system, and introduced the so-called wire-tap channel \cite{wyner1975wire}. The secrecy provided by Wyner's method relies on random coding existence results. These random codes should have two features: (i) error-free detection for the legitimate receiver; and (ii) total randomness in erroneous detection, i.e., a bit error rate (BER) of 50\% for the Eve. In recent years, there have been codes (constructed from linear structures) with iterative decoding that come quite close to the Shannon limit in terms of providing error-free detection. However, these codes do not necessarily satisfy the second requirement of total randomness in erroneous detection. Random coding for a reliable and secure transmission to achieve the secure channel capacity has some fundamental differences with the case of channel capacity, and the design of channel codes that could mimic the random codes for secure and reliable transmission is in its infancy. 
	
	A common metric used in PLS to assess and compare different practical transmission schemes is the security gap \cite{klinc2011ldpc}. This refers to the gap in term of loss in energy to the optimum benchmark for reliability and secrecy obtained from the underlying information theoretic existence results (due to Wyner and relevant subsequent works) on the level of secrecy. Many recent practical coding schemes directly aim to reduce this security gap. In \cite{klinc2011ldpc}, the authors proposed to use punctured Low-Density Parity-Check (LDPC) codes for PLS. \cite{baldi2012coding} used scrambling, concatenation, and hybrid automatic repeat-request to LDPC and Bose–Chaudhuri–Hocquenghem (BCH) codes to farther lessen the security gap. Using polar and LDPC codes at the same time is also proposed in \cite{zhang2014polar}. Co-set codes constructed based on convolutional and turbo codes are proposed in \cite{nooraiepour2017randomized}. Also, the possibility of exploiting Serially Concatenated Low Density Generator Matrix (SCLDGM) codes in PLS schemes is considered in \cite{nooraiepour2018randomized}. An overview
	of error-control coding techniques for PLS can be found in \cite{aghdam2018overview}. It is seen that, in the best case, when a BER of $50\%$ is desired for the Eve, security gaps as small as $1~dB$ could be achieved by appropriately designing codes with long length \cite{aghdam2018overview}. 
	
	In fact, the simple-to-compute property of security gap makes it a useful criterion to avoid potential complexities associated with computing the original information-theoretic metrics. Nevertheless, there are no known techniques to quantify the impact of any such gap. This issue is non-existent in the technique proposed in this paper.
	
	As another approach in key-less PLS, the idea proposed by Goel and Nagi in \cite{negi2005secret, goel2005secret,goel2008guaranteeing} achieves the pefect secrecy, where additional artificially interfering signals are transmitted to the channel along with the original signals. In this method, the Eve's channel is degraded by injecting artificial noise into the null space of the legitimate user's channel. However, this technique sacrifices some power resources which consequently decreases the main channel capacity.  
	
	\subsection{Key-based PLS}
	On the contrary, key-based PLS schemes do not require an intricate computation, and they are practically more feasible \cite{ahlswede1993common,maurer1993secret}. Channel reciprocity is one of the main principles used for the process of generating key in wireless communications. This feature implies that the transmitted signal from both legitimate parties experiences almost the same fading, and thus the same key is generated at the legitimate nodes.
	Among different reciprocal features representing unpredictable radio channel parameters, Received Signal Strength (RSS) and Channel State Information (CSI) are the two common source of randomness. RSS has been the most favoured channel parameter used for key generation \cite{zhang2016key}. However, one of the main shortcomings of key generation methods based on RSS is that only one RSS value can be obtained during the channel coherence time, limiting the Key Generation Rate (KGR). On the other hand, CSI, which mainly refers to Channel Impulse Response (CIR) in time domain and Channel Frequency Response (CFR) in frequency domain, is proven to be suitable for key generation \cite{mathur2011proximate}. In these schemes, both amplitude and phase information of CSI could be exploited. It is worth noting that the channel phase, in comparison with channel amplitude, could potentially grant a better level of secrecy for the following reasons:
\begin{itemize}
    \item{The distribution of the phase, specifically in our case, will be uniform in $[0,2\pi)$ \cite{goldsmith2005wireless}. Therefore, any conditional density built by Eve based on its local observation of phase and/or magnitude conveys zero information about the phase values shared between legitimate nodes.} 
    \item{The magnitude has a known probability density function and Eve can form a conditional density function by conditioning on its local observations of signal magnitude, while accounting for factors such as the distance that affects the signal strength.}
    \item{The channel phase is more sensitive to the temporal variation of wireless channels. For instance, a small movement of transmitter, receiver or any other objects in the vicinity of legitimate users causes a noticeable change in the phase of the channel, while the amplitude might remain unchanged}.
\end{itemize}

Henceforth, phase is a promising random characteristic of radio channels for key generation schemes. The idea of using phase reciprocity for key generation was suggested in \cite{hershey1995unconventional}. In this work, both parties send $T$ known tones in both directions and by calculating the successive differences between the phase angles of the various tones, they establish $T-1$ shared phases. In \cite{hassan1996cryptographic}, each of the legitimate parties send two unmodulated tones with equal phases over two different frequencies; then, the parties quantize the phase difference between the two tones to establish a shared key. In continuation of \cite{hassan1996cryptographic}, authors in \cite{sayeed2008secure} find the probability that both ends of a link generate the same quantization index for a particular phase difference. Afterward, \cite{koorapaty2000secure} uses pre-coding for secure transmission schemes proposed in \cite{hassan1996cryptographic}. In \cite{koorapaty2000secure}, the first node sends M sinusoids with the same phase over M different frequencies; then, the second node calculates the $M-1$ phase differences between the received $M$ tones to pre-code Phase Shift Keying (PSK) symbols to be sent back to the first node; then, the first node calculates $M-1$ phase differences between the received $M$ tones which enables it to decode the information content of PSK symbols. Inspired by \cite{koorapaty2000secure}, 
in \cite{lai2017secure}, a secure transmission scheme for downlink Sparse Code Multiple Access (SCMA) by rotating the constellations using reciprocal channel phases is proposed. Furthermore, in \cite{althunibat2017physical}, the authors suggested to use reciprocal channel phases to rotate the transmitted signal, and to change the modulation type among a supported set. Although the pre-coding operation used in \cite{koorapaty2000secure,lai2017secure,althunibat2017physical} is similar to the masking operation performed in the current article, these works do not discuss the relevant information hiding capabilities. In general, the problem with all the above-mentioned methods (which exploit phase reciprocity to establish a shared key) is that if one legitimate user transmits a sequence of data in less than the coherence time of the channel, there will be a strong correlation between the subsequent transmissions which could be exploited by Eve. This problem is more severe in static or slow-fading channels (where the channel temporal variation rate is very slow), reducing the effective KGR.

To tackle the above-mentioned problems, some research works have proposed the idea of random beam-forming to enhance KGR when exploiting reciprocity over a static channel \cite{madiseh2012applying,cheng2015secret,li2017security}. The hardware complexities of realization of these methods are significantly high; because random beam-forming requires multiple active antennae, each requiring a dedicated base-band and Radio Frequency (RF) front-end. However, our approach relies on a single antenna and a single RF front end, and randomized switching of parasitic elements is performed using low-cost Positive-Intrinsic-Negative (PIN) diodes. 

In this article, a new practical system for achieving unconditional secrecy in key exchange based on phase reciprocity is presented. The main contribution is in practical implementation of the ideas first introduced in \cite{khandani2013two,khandani2017full}, and continued in \cite{khandani2019practical}. Indeed, the protocols of key sharing discussed in \cite{khandani2019practical} are enhanced to make them more suitable for practical implementation. Furthermore, mathematical proof for completely masking the content of PSK symbols, and for unconditional secrecy obtained in the proposed method is provided. 

Our method exploits the fact that the channel phase is uniformly distributed in $[0,2\pi)$. This allows completely masking (hiding) PSK symbols by modulo-$2\pi$ addition of phase values (analogous to modulo-2 addition in binary XOR). In a first step, a number of reciprocal phase values are measured at the two legitimate nodes connected through a wireless link. Due to channel reciprocity, the phase values measured at the two ends will be the same except for small deviations due to various sources of error, such as independent noise in legitimate parties' devices and time/frequency mismatch. Data bits to be communicated securely across the channel, after going through Forward Error Correction (FEC), are mapped to PSK constellation points, and each such PSK constellation is masked (rotated) using one of the shared phase values. An RF structure composed of a central transmit antenna surrounded by switchable parasitic RF elements is presented. This structure is used to perturb the RF environment of the transmitter and/or receiver antennas. This enables our method to offer a high KGR, and further, tackle with the problem of phase correlation in static and slow fading channels.

As mentioned earlier, it is difficult to translate the existent information-theoretical results which are based on random coding to practical implementation that guarantees both reliability and security (50$\%$ bit error for any erroneous transmissions).
One of the major benefits associated with the technique proposed here is that the mismatch between phases can be corrected by means of standard FEC techniques.

The rest of the paper is organized as follows. Section \ref{secrecy} provides the proof for the unconditional secrecy granted by masking PSK symbols. Section \ref{RF-mirrors} defines an RF mirror structure by which the channel in the vicinity of the legitimate nodes is perturbed, resulting in independent channel realizations. Section \ref{Configurations} elaborates on two different antenna configurations to establish common shared phase values between two legitimate parties. Section \ref{modulation} states a modulation technique for the proposed key exchange protocol. In Section \ref{results}, some experimental results are presented, and finally, Section \ref{conclusion} concludes the paper.

\emph{Notations:} In the following, entropy is denoted by $H(.)$ and the mutual information is denoted by $I(.;.)$. We use $u \perp v$ to show that $u$ and $v$ are independent. Notation $\oplus$ is used to represent modulo-$2\pi$ addition.
\section{Perfect secrecy of the proposed key generation method} \label{secrecy}
A binary key should be ideally composed of Independent and Identically Distributed  (i.i.d.) binary digits with the probability of zero and one being equal to $\frac{1}{2}$. This maximizes the entropy of the key for a given length of binary digits. This in turn renders eavesdropping as difficult as possible. Under such an assumption, a key of length $L$ will contain $L$ bits of information. On the other hand, to combat the effect of channel noise and other imperfections typical in wireless transmission, the $L$ key bits should undergo FEC, adding $r$ bits of channel coding redundancy to the original $L$ bits. It follows that $L+r=m N_p$, where $N_p$ is the number of $2^m$-PSK symbols carrying the key and the associated redundancy added by the deployed FEC. The original $L$ bits are selected to have the maximum possible information, i.e., $L$ bits of information. On the other hand, the $r$ bits of channel coding redundancy are dependent to the original $L$ bits, carrying no information. Therefore, the content of the encoded key stream of $L+r$ binary digits contains only $L$ bits of information. This brings up a question/concern if Eve can use the $r$ redundant bits to extract any information about the shared key. In the proposed protocol of key sharing, each $2^m$-PSK symbol is rotated by an independent shared phase value uniformly distributed in $[0,2\pi)$ obtained by the reciprocal property of the channel between the legitimate users. On the other hand, the following theorem proves that, if two phases are added modulo-$2\pi$, and if only one of the two phases is uniformly distributed in $[0,2\pi)$, the summation will not convey any information about the summands. Consequently, regardless of the FEC redundancy in the encoded stream of key bits, the addition of PSK constellations with (reciprocally) shared channel phases will (i) completely mask the content of each symbol, and (ii) make the received $N_p$ symbols independent. Therefore, FEC does not reveal any information to Eve, and its only negative side effect is that it decreases the key rate by a factor of $\frac{L}{L+r}$.

\begin{theorem} \label{uniform}
	Assume that angle $x$ is uniformly distributed in $[0, 2\pi)$, and  $y$ is another random angle, independent of $x$, distributed in  $[0, 2\pi)$ with certain Probability Distribution Function (PDF). Then modulo-$2\pi$ addition of $x$ and $y$ is also uniformly distributed in $[0, 2\pi)$.
\end{theorem}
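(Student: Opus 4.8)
The plan is to compute the distribution of $z = x \oplus y$ directly via a conditioning argument on $y$, exploiting the fact that modulo-$2\pi$ addition by any fixed constant is a measure-preserving bijection of the circle $[0,2\pi)$. First I would recall that since $x$ is uniform on $[0,2\pi)$ and independent of $y$, for any fixed value $y = \theta$ the conditional random variable $(x \oplus \theta \mid y = \theta)$ is again uniform on $[0,2\pi)$: rotating a uniform circular variable by a constant leaves it uniform. This is the crux of the argument, and it is essentially the ``one-time-pad on the circle'' property alluded to in the text.

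Next I would formalize this with the law of total probability. For a Borel set $S \subseteq [0,2\pi)$, write
\begin{equation}
\Pr[z \in S] = \int_0^{2\pi} \Pr[x \oplus \theta \in S \mid y = \theta]\, f_y(\theta)\, d\theta = \int_0^{2\pi} \Pr[x \in (S \ominus \theta)]\, f_y(\theta)\, d\theta,
\end{equation}
where $S \ominus \theta$ denotes the set $S$ rotated by $-\theta$ modulo $2\pi$ and $f_y$ is the PDF of $y$. Since rotation preserves Lebesgue measure on the circle, $\Pr[x \in (S \ominus \theta)] = \frac{|S \ominus \theta|}{2\pi} = \frac{|S|}{2\pi}$, independent of $\theta$. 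Pulling this constant out of the integral and using $\int_0^{2\pi} f_y(\theta)\, d\theta = 1$ yields $\Pr[z \in S] = \frac{|S|}{2\pi}$, which is exactly the statement that $z$ is uniform on $[0,2\pi)$.

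An alternative I would mention is the Fourier/characteristic-function route: the circular characteristic function of $z$ factors as $E[e^{jkz}] = E[e^{jkx}]\,E[e^{jky}]$ by independence, and $E[e^{jkx}] = 0$ for every nonzero integer $k$ because $x$ is uniform, so $E[e^{jkz}] = 0$ for all $k \neq 0$, forcing $z$ to be uniform. This is slicker but requires invoking uniqueness of Fourier coefficients on the circle.

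The main obstacle is not conceptual but a matter of rigor: one must be careful that the modulo-$2\pi$ reduction is handled correctly so that ``rotation by $\theta$'' really is a bijection of $[0,2\pi)$ onto itself that preserves length, i.e., that wrap-around does not distort the measure of $S$. Making the set-manipulation $\Pr[x \in (S \ominus \theta)] = |S|/(2\pi)$ precise — splitting $S \ominus \theta$ into the at most two arcs it decomposes into after reduction and checking their lengths sum to $|S|$ — is the only place where genuine (though elementary) care is needed; everything else is bookkeeping with the law of total probability.
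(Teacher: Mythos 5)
Your proof is correct, but it follows a genuinely different route from the paper's. The paper works on the real line: it first computes the density of the ordinary sum $x+y$ by convolution, using the uniformity of $x$ to express $f_{X+Y}(t)$ as $\frac{1}{2\pi}F_Y(t)$ on $[0,2\pi]$ and $\frac{1}{2\pi}\bigl(1-F_Y(t-2\pi)\bigr)$ on $[2\pi,4\pi)$, and then folds the second piece back onto $[0,2\pi)$ (via a CDF computation) to conclude that the wrapped density is the constant $\frac{1}{2\pi}F_Y(t)+\frac{1}{2\pi}\bigl(1-F_Y(t)\bigr)=\frac{1}{2\pi}$. You instead condition on $y=\theta$ and invoke the rotation invariance of the uniform distribution on the circle, so that $\Pr[x\oplus\theta\in S]=|S|/(2\pi)$ for every $\theta$, and the law of total probability finishes the job. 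Your route is shorter, sidesteps the explicit convolution and the paper's somewhat delicate CDF bookkeeping (where the conditioning and wrap-around must be handled carefully), and, as you note implicitly, it does not even require $y$ to possess a density --- replacing $f_y(\theta)\,d\theta$ by $dF_Y(\theta)$ gives the result for an arbitrary distribution of $y$, as does your Fourier alternative, since all nonzero circular Fourier coefficients of the uniform law vanish. What the paper's computation buys in exchange is an explicit expression for the unwrapped density $f_{X+Y}$, which makes the folding step concrete and elementary, at the cost of length and of obscuring the one-line structural reason (translation invariance of Haar measure on the circle) that your argument puts front and center. Your caution about verifying that $S\ominus\theta$ splits into at most two arcs of total length $|S|$ is exactly the right point at which rigor is needed in your version, and it is easily discharged.
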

\begin{proof}
	Denote by $f_{\alpha}(.)$ and $F_{\alpha}(.)$ the PDF and Cumulative Distribution Function (CDF) of the random variable $\alpha$, respectively. Also, $p\{E\}$ represents the probability of an event $E$.
	First, we compute $f_{X+Y}(t)$, and then, accordingly, compute $f_{X \oplus Y}(t)$. 
	
	Since $x$ and $y$ are two independent random variables, the PDF of their sum, i.e., $f_{X+Y}(t)$, is computed by convolving their respective PDF as follows  
	\begin{align} \label{conv}
		f_{X+Y}(t)=\int_{-\infty}^{\infty}	f_X(t-\tau)f_Y(\tau)d\tau.
	\end{align}
	
	As $x$ and $y$ are defined in $[0, 2\pi)$, $f_{X+Y}(t)$ has  non-zero values only in $[0, 4\pi)$. To compute~\eqref{conv}, the range of $t$ is partitioned into two parts, namely, $0 \leq t \leq 2\pi$ and $2\pi \leq t < 4\pi$, and then~\eqref{conv} is computed separately over each part. 
	Noting that $f_X(x)=\frac{1}{2\pi}$ ($x$ is uniformly distributed),
	for $0 \leq t \leq 2\pi$, equation \eqref{conv} becomes
	\begin{align} \label{conv1}
		f_{X+Y}(t)&=\int_{0}^{t} \frac{1}{2\pi} f_Y(\tau)d\tau
		=\frac{1}{2\pi} F_Y(t)
	\end{align} 
	and for $2\pi \leq t < 4\pi$, equation \eqref{conv} becomes
	\begin{align} \label{conv2}
		f_{X+Y}(t)&=\int_{t-2\pi}^{2\pi} \frac{1}{2\pi}  f_Y(\tau)d\tau  
		=\frac{1}{2\pi} \big( F_Y(2\pi) - F_Y(t-2\pi) \big)  
	=\frac{1}{2\pi} \big( 1 -  F_Y(t-2\pi) \big) 
	\end{align}
	and otherwise $f_{X+Y}(t)=0$. 
	
	In the following, we will compute $f_{X \oplus Y}(t)$ based on $f_{X+Y}(t)$ obtained above (note that to compute $f_{X \oplus Y}(t)$ we need to consider the modulo-$2\pi$ operation). 
	
	For a fixed value $0 \leq t_0 \leq 2\pi$, if $x \oplus y=t_0$, then either (i) $x + y=t_0$, or (ii) $x + y=t_0+2\pi$ (note that $0 \leq x+y < 4\pi$). Case (i) happens when $0 \leq x+y \leq 2\pi $ and case (ii) happens when $2\pi \leq x+y < 4\pi$. Therefore, from the Bayes' theorem, $p\{x \oplus y=t_0\}$ is computed as follows
	\begin{align} \label{px+y}
		p\{x \oplus y=t_0\} 
		&= p \{ 0 \leq x+y \leq 2\pi \} ~ p \{x+y=t_0 | 0 \leq x+y \leq 2\pi \} \nonumber \\
		&+ p \{ 2\pi \leq x+y < 4\pi \} ~ p \{x + y=t_0+2\pi | 2\pi \leq x+y < 4\pi \}.
	\end{align}
	
	Thus, $F_{X \oplus Y}(t)$ can be computed as follows
	\begin{align} \label{CDF_x+y}
		F_{X \oplus Y}(t)
		&= p \{ 0 \leq x+y \leq 2\pi \} ~ p \{0 \leq x+y \leq t |0 \leq x+y \leq 2\pi \} \nonumber \\
		&+ p \{ 2\pi \leq x+y < 4\pi \} ~ p \{ 2\pi \leq x+y \leq t+2\pi | 2\pi \leq x+y < 4\pi \}  \nonumber \\
		&= \big( \int_{0}^{2\pi} \frac{1}{2\pi} F_Y(\tau) d\tau \big)\big( \int_{0}^{t} \frac{1}{2\pi} F_Y(\tau) d\tau \big) ~~~~~~~~~~~~~\nonumber\\
		&+ \big( \int_{2\pi}^{4\pi} \frac{1}{2\pi} \big( 1 -  F_Y(\tau-2\pi) \big) d\tau  \big)\big( \int_{2\pi}^{t+2\pi} \frac{1}{2\pi} \big( 1 -  F_Y(\tau-2\pi) \big) d\tau \big)
	\end{align}
	where the last equation is obtained from \eqref{conv1} and \eqref{conv2}.
	Let us use $M_y(.)$ as the anti-derivative of $F_y(.)$. Noting that $f_{X \oplus Y}(t)=\frac{d}{dt} \big( F_{X \oplus Y}(t) \big)$, and that $\frac{d}{dt} \big( \int_{t_0}^{t} F_Y(\tau) d\tau \big)=F_Y(t)$ for a constant value of $t_0$ \cite{strangcalculus}, from \eqref{CDF_x+y} we have
	\begin{align}
		f_{X \oplus Y}(t) &=\frac{1}{2\pi} \big( M_y(2\pi)-M_y(0)\big) \big( \frac{1}{2\pi} F_{y}(t) \big) 
		+
		\frac{1}{2\pi} \big( \frac{2\pi-M_y(2\pi)+M(0)}{2\pi} \big) \big( \frac{1}{2\pi} F_y(t) \big)
		=\frac{1}{2\pi}.
	\end{align}
	
	Therefore, the Theorem \ref{uniform} is proved.
\end{proof}

\section{ RF-mirrors Structure} \label{RF-mirrors}
Perturbing the RF environment of Transmit/Receive (TX/RX) antenna is an essential part of the proposed system.  Fig.\,\ref{RFm} shows the RF structure used for this purpose. This structure was first introduced in \cite{khandani2013media} for Media-based Modulation. A number of switchable parasitic elements, hereafter referred as RF mirrors, is surrounding one or two transmitting antenna(s). Each RF wall (mirror) surrounding the center antenna(s), depending on whether it is in the OFF or ON state, passes the energy to its external surface to propagate outside the enclosure, or reflects the energy back (interior of the enclosure). In this manner, the energy bounces back and forth among different surrounding walls, and in doing so, creates a pseudo-random propagation pattern. Consequently, the state of each wall, as a result of its interaction with other walls, will have a multiplicative effect on the total number of generated antenna patterns. This feature causes an exponential growth for  reflection/propagation patterns in terms of the number of RF mirrors. This is similar to the effect of forming the image of a mirror in another mirror in the construction of a kaleidoscope. The key difference is that in our case, each surrounding mirror in the kaleidoscope can be selectively turned on (to reflect light back to the interior of the kaleidoscope enclosure), or turned off (to let the light rays leave the kaleidoscope enclosure in the corresponding area). We refer to the different antenna patterns created by on-off mirrors as transmitter states. 
The randomization phenomenon caused by pseudo-random selection of on-off mirrors is enhanced by multi-path propagation between the transmitting unit and its corresponding distant receiver. In other words, the outgoing RF signal corresponding to each pseudo-random antenna pattern will take various independent paths in reaching the distant receiver, resulting in a different complex gain for multi-path fading. Overall, the proposed structure creates a rich-scattering environment within the enclosure, which is further enriched through external multi-path propagation. 

It is noteworthy that, unlike the methods in PLS which rely on the magnitude of the RF signal, our technique relies only on the phase which is much more prone to changes while propagating in a rich scattering environment. As a result, the channel phase corresponding to different transmitter states will be uniformly distributed in $[0,2\pi)$, and also will be independent of each other. Fig.\,\ref{normconst} shows a normalized received constellation, where the constellation is subtracted by its empirical (complex) mean in order to eliminate its bias with respect to origin. The reason of subtracting the bias is that, due to limitations in laboratory (indoor) physical area, the two nodes exchanging RF signals for key establishment have been placed within few meters of each other, resulting in a strong (unwanted) Line-of-Sight (LOS) component. However, the LOS would be absent in an outdoor environment with larger separation between the transmitter and receiver units. Each point in Fig.\,\ref{normconst} corresponds to a random combination of RF mirrors switched to ON or OFF state. The red zone in the center of the constellation reflects the received signal when all RF mirrors are OFF. Also, Fig.\,\ref{AK-pic3} shows samples of antenna patterns corresponding to four different random selection of ON/OFF mirrors. Fig.\,\ref{AK-pic5} shows simulation results for sample in-door and out-door environments obtained by exporting several antenna patterns into a wireless EM propagation software (REMCOM wireless InSite).
\begin{figure}[h!] 
    \centering
	\includegraphics[width=0.5\columnwidth]{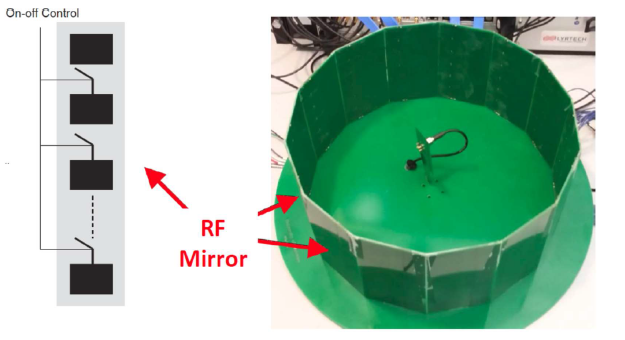}
	\caption{Structure of RF-mirrors.}
	\label{RFm}
\end{figure}

\begin{figure}[t] 
  \centering
  \subfloat[]{\includegraphics[width=5.5cm,keepaspectratio=ture]{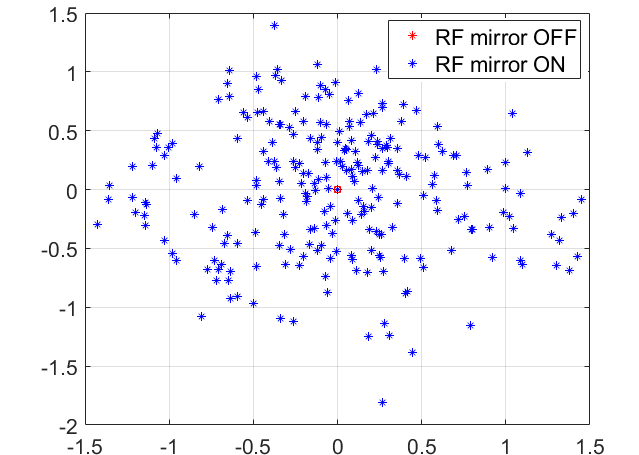}\label{normconst}}
  \subfloat[]{\includegraphics[width=5cm,keepaspectratio=ture]{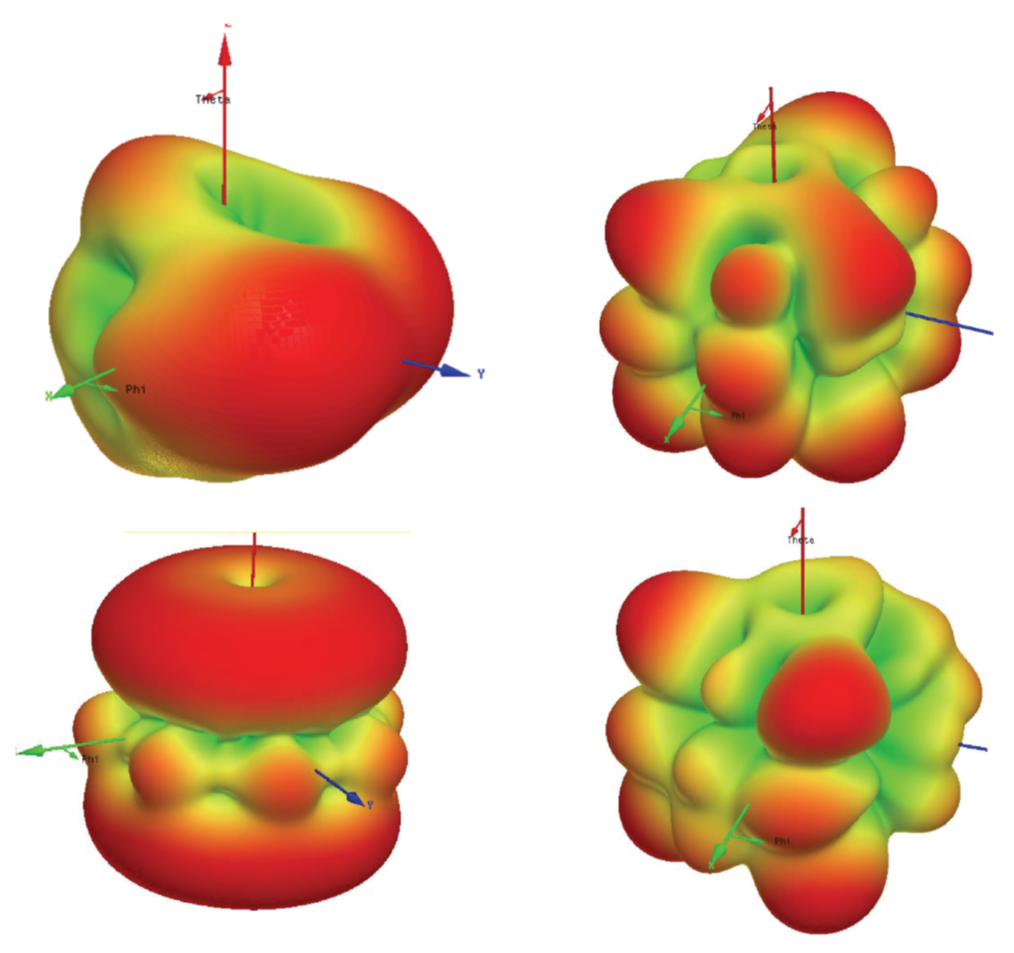}\label{AK-pic3}}\\
  \caption{Randomness caused by the mirror structure. (a) Normalized constellation obtained from the over-the-air transmission. (b) Samples of antenna patterns corresponding to four random states of ON/OFF mirrors.}
  \label{Fig:model_verification}
  \vspace{-0.5cm}
\end{figure}

\begin{figure}[h!] 
	\includegraphics[width=.5\textwidth,center]{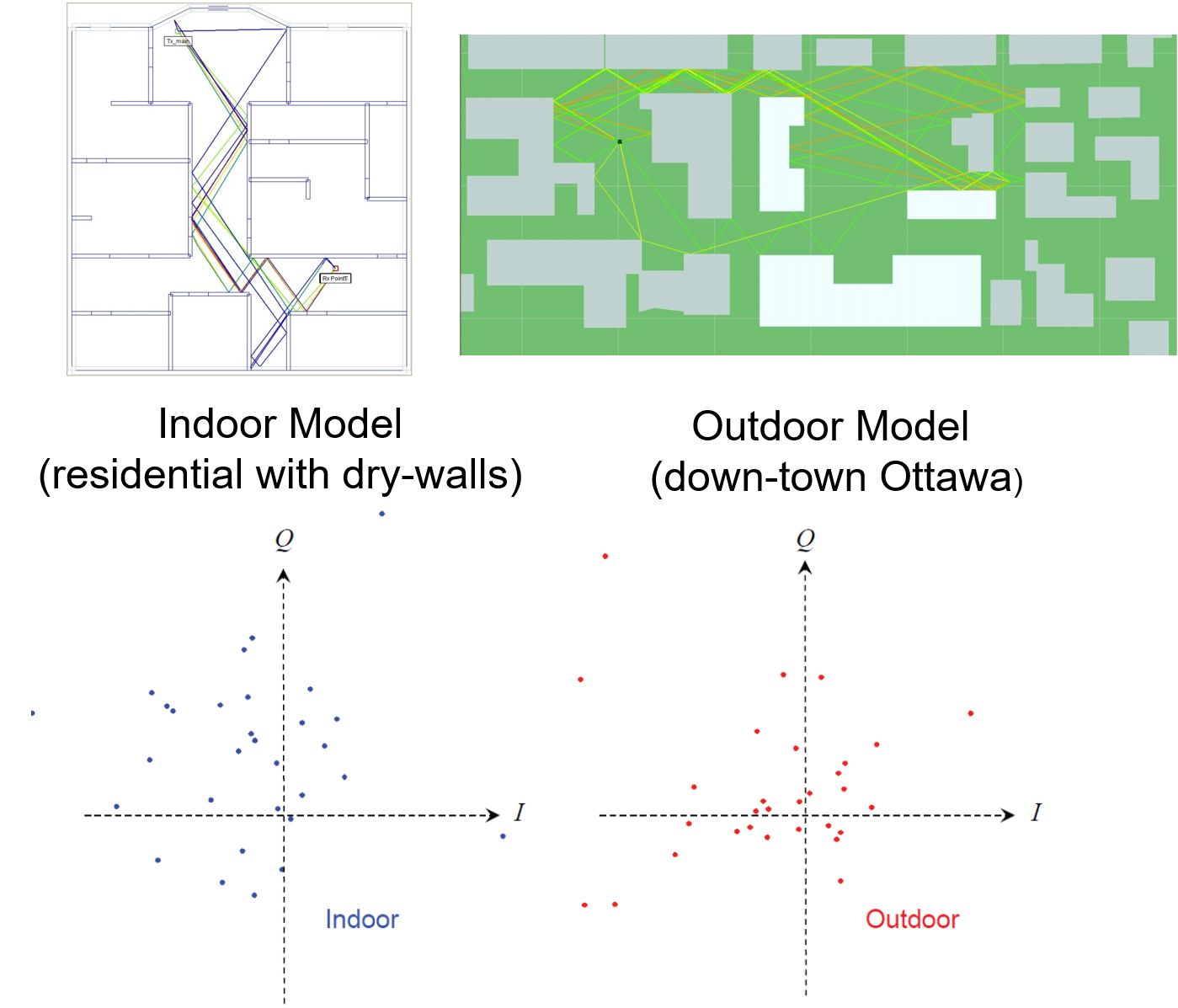}
	\caption{Simulation results for sample in-door and out-door environments obtained by exporting several antenna patterns into wireless EM propagation software REMCOM wireless InSite.}
	\label{AK-pic5}
\end{figure}

\section{Antenna Configurations} \label{Configurations}
The main condition for perfect secrecy is that each transmit antenna should transmit only once in each channel state, and then its associated RF mirrors should be changed to a new configuration for the next transmission (to establish an independent phase value).

In this section, we present two different configurations satisfying these requirements. In addition, we show that a potential eavesdropper cannot attain any information when two legitimate nodes are exchanging keys. Similar to assumptions in other papers \cite{mathur2008radio}, it is assumed that the eavesdropper is aware of the key exchange protocols. 

The first configuration uses a single antenna at each legitimate node (two-antenna system), and the second configuration uses two antennas at each legitimate node (four-antenna system). As it will be discussed in this section, the four-antenna system resolves the issue of synchronization associated with two-antenna system, yet it has a higher hardware complexity. 

\subsection{Two-antenna System} \label{Sec:two}
In the two-antenna configuration, the reciprocal phase is measured as the phase of the channels from Alice to Bob and from Bob to Alice, represented as Alice$\to$Bob and  Bob$\to$Alice, respectively. This protocol of key sharing relies on using a pilot which is known to both Alice and Bob. Also, the RF signal is modulated using Orthogonal Frequency-Division Multiplexing (OFDM) over $S$ frequency tones, i.e., $\{f_1,f_2,...,f_S\}$.  

\subsubsection{Notations and Assumptions} 
Assume that only Alice's transmitter is equipped with the RF mirror structure explained in Section \ref{RF-mirrors}. Note that although having the mirror structure at both Alice's and Bob's sides enhances the richness of the state space, it is adequate to have only one side equipped with the mirror structure.
Denote by $K$, the number of mirrors at Alice's side.  Therefore, based on whether a mirror is in the ON or OFF state, $2^K$ different states will be realized. We use $\alpha^i$ to show that the phase $\alpha$ is measured when the mirrors are in state $i$, $1 \leq i \leq 2^K$.
We assume that the Eve is equipped with $n$ antennas overhearing the messages transmitted between Alice and Bob. As indicated in Fig.\,\ref{fig:Eve-2-antennas}, Eve's antennas are denoted by $\{\mathsf{e}_1,\mathsf{e}_2,...,\mathsf{e}_n\}$. We denote Alice's and Bob's antennas by $\mathsf{a}$ and $\mathsf{b}$, respectively. When the Alice's mirrors are in state $i$, we denote by $\mathsf{ab}^i$ the phase that Bob records when Alice acts as the transmitter; and by $\mathsf{ba}^i$ the phase that Alice records when Bob acts as the transmitter. Furthermore, we indicate by $\mathsf{ae}_k^i$ the phase from Alice's antenna to the $k^{th}$ antenna of Eve when the mirrors are in state $i$. Note that as the mirrors are on the Alice's side, Eve's observations from Bob's antenna are independent of the mirror state. Hence, we denote by $\mathsf{be}_k$ the phase angle from Bob's antenna to the Eve's $k^{th}$ antenna (see Fig.\,\ref{fig:Eve-2-antennas}).  

\begin{figure}[h]
	\centering
	\includegraphics[width=0.45\textwidth]{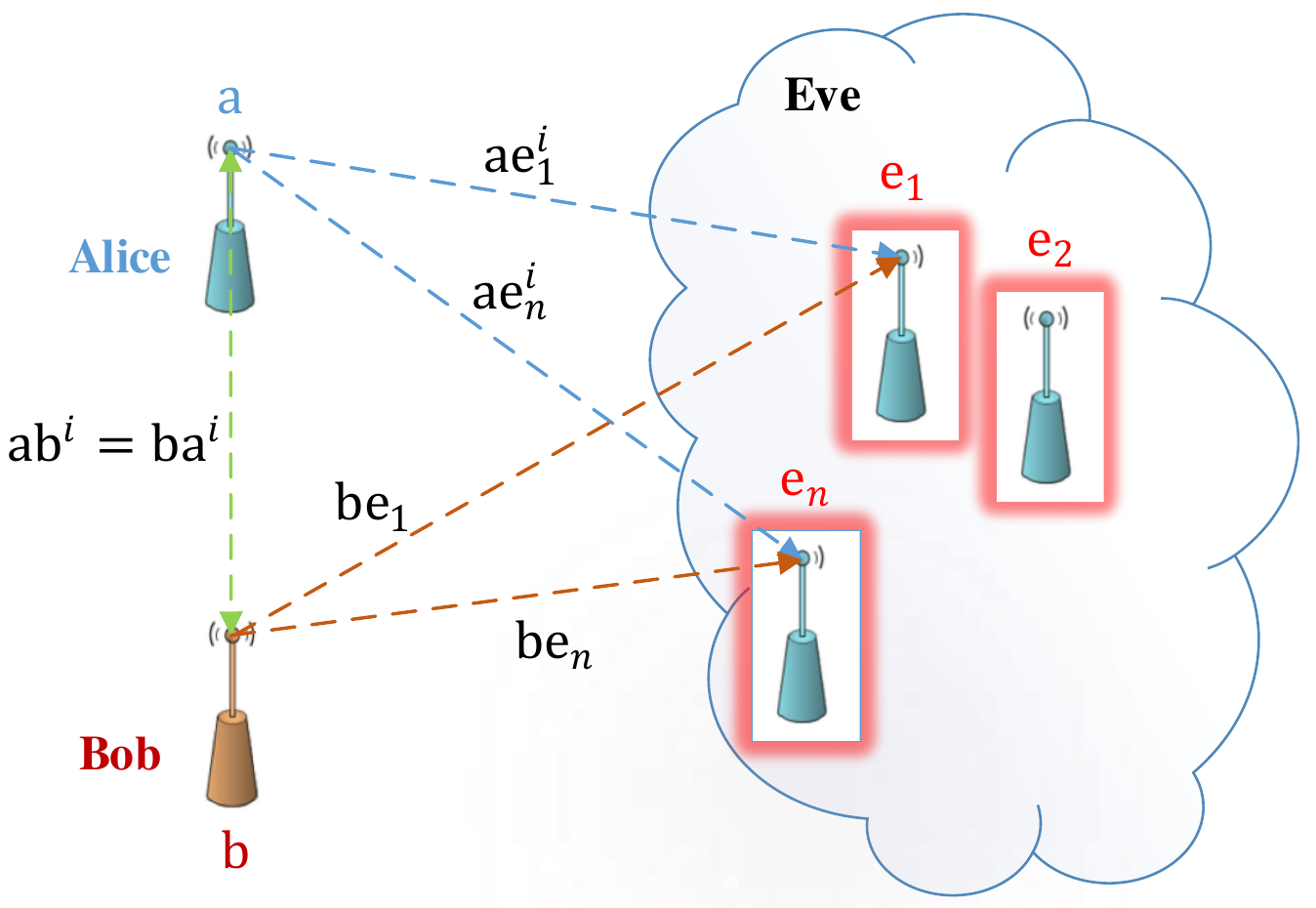}
	\caption{The two-antenna system. Eve's observations from both Alice's and Bob's antennas are indicated when the Alice's mirrors are at the state $i$. Eve's observations from Bob's antenna are independent of the mirror's state as the mirror structure is on the Alice's side.}
	\label{fig:Eve-2-antennas}
\end{figure}

\subsubsection{The protocol of Key sharing} 
To attain a shared key, each cycle of key sharing is composed of four transmissions: two consecutive transmissions from Alice to Bob, followed by two consecutive transmissions from Bob to Alice. In the following, one cycle of transmission is elaborated by referring to Fig.\,\ref{Preamble}.

\begin{figure}
    \centering
	\includegraphics[width=\textwidth,height=4.5cm]{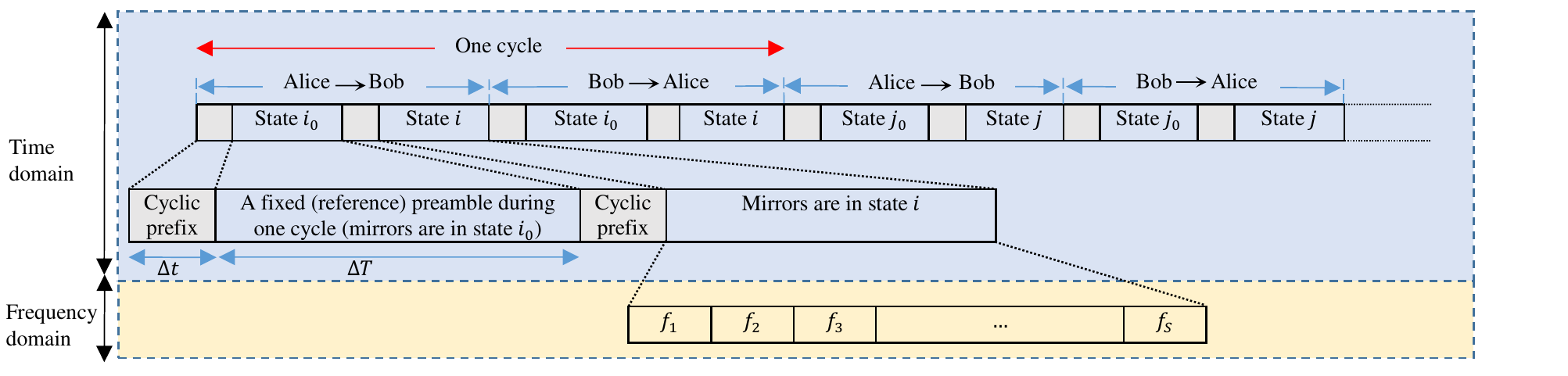}
	\caption{The protocol of key sharing in the two-antenna system. The time slot corresponding to each state in the time domain is divided into $S$ tones in the frequency domain.}
	\label{Preamble}
\end{figure}

\begin{itemize}
	\item{\textit{First transmission}:}
	Alice sends a preamble (pilot) to Bob using OFDM modulation. Then, Bob measures the phase of the received signal over each tone, i.e., Bob records $\mathsf{ab}^{i_0}(f_l)$ for $1 \leq l \leq S$; note that $\mathsf{ab}^{i_0}$ is a function of frequency.
	
	\item{\textit{Second transmission}:}
	In this phase, the mirrors must be switched to a new state $i$. To avoid intersymbol interference (ISI), a cyclic prefix with duration of $\Delta t$ is used. Note that changing the state of the mirrors increases the delay spread of the channel, and therefore switching to a new state must take place in the early parts of the cyclic prefix to ensure that $\Delta t>\tau_{max}$, where $\tau_{max}$ is the maximum delay spread of the channel.
	After the cyclic prefix duration is over, Alice sends another RF signal to Bob using OFDM modulation. Bob measures the change in the phase of the pilot in each frequency band with respect to the corresponding tone in the preamble, i.e., Bob records $\big(\mathsf{ab}^i(f_l) \big) \ominus \big(\mathsf{ab}^{i_0}(f_l)\big)$ for $1 \leq l \leq S$. 
	
	\item{\textit{Third transmission}:}
	Now, the roles of Alice and Bob are reversed. Bob sends the same preamble to Alice as the one that Alice sent to Bob in the \textit{First transmission}. Therefore, in this protocol, it is required to have a preamble known to both legitimate nodes. To obtain the same preamble, the Alice's mirrors are returned back to the reference states $i_{0}$, and after the duration of cyclic prefix $\Delta t$, an RF signal is sent from Bob to Alice. Alice measures the phase of the received signal, i.e., Alice records $\mathsf{ba}^{i_0}(f_l)$ for $1 \leq l \leq S$. 
	
	\item{\textit{Forth transmission}:}
	Finally, Alice's mirrors state is changed to $i$, and after the duration of cyclic prefix (which is a part of OFDM signal), another RF signal will be sent from Bob to Alice. Alice measures the change in the phase of the pilots in each frequency band with respect to the corresponding tone in the preamble, i.e., Alice records $\big(\mathsf{ba}^i(f_l) \big) \ominus \big(\mathsf{ba}^{i_0}(f_l)\big)$ for $1 \leq l \leq S$. 
\end{itemize}
\begin{claim}
Based on the cycle of transmission elaborated above, Alice and Bob can obtain a shared common phase value as $\big(\mathsf{ab}^i(f_l) \ominus \mathsf{ab}^{i_0}(f_l) \big)=\big( \mathsf{ba}^i(f_l) \ominus \mathsf{ba}^{i_0}(f_l) \big)$.
\end{claim}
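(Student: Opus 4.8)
The plan is to decompose each recorded phase into an over-the-air propagation term and a device-dependent term, invoke channel reciprocity to match the propagation terms across the two directions, and then observe that the modulo-$2\pi$ differencing against the reference state $i_0$ annihilates the device-dependent terms. I would not present a lengthy calculation; the core of the argument is a one-line modular cancellation, and the real work is in setting up the model that makes it legitimate.

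First I would model the phase Bob records in the second transmission as $\mathsf{ab}^i(f_l) = \theta^{\mathrm{tx}}_{\mathsf a}(f_l) \oplus \varphi^{i}_{\mathsf a \mathsf b}(f_l) \oplus \theta^{\mathrm{rx}}_{\mathsf b}(f_l)$, where $\varphi^{i}_{\mathsf a \mathsf b}(f_l)$ is the phase of the over-the-air channel from $\mathsf a$ to $\mathsf b$ when Alice's mirrors are in state $i$, and $\theta^{\mathrm{tx}}_{\mathsf a}$, $\theta^{\mathrm{rx}}_{\mathsf b}$ collect the state-independent phase contributions of Alice's transmit chain and Bob's receive chain (local-oscillator phase, filters, sampling phase, and so on). The first transmission, recorded against the reference state, is $\mathsf{ab}^{i_0}(f_l) = \theta^{\mathrm{tx}}_{\mathsf a}(f_l) \oplus \varphi^{i_0}_{\mathsf a \mathsf b}(f_l) \oplus \theta^{\mathrm{rx}}_{\mathsf b}(f_l)$, with the crucial point that the device terms are \emph{the same} in both transmissions because the whole cycle occurs within one coherence interval and only the mirror configuration changes. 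Subtracting modulo $2\pi$ gives $\mathsf{ab}^i(f_l) \ominus \mathsf{ab}^{i_0}(f_l) = \varphi^{i}_{\mathsf a \mathsf b}(f_l) \ominus \varphi^{i_0}_{\mathsf a \mathsf b}(f_l)$. Repeating the argument verbatim for the third and fourth transmissions, with Bob transmitting, Alice receiving, and Alice's mirrors returned to $i_0$ and then set to $i$, yields $\mathsf{ba}^i(f_l) \ominus \mathsf{ba}^{i_0}(f_l) = \varphi^{i}_{\mathsf b \mathsf a}(f_l) \ominus \varphi^{i_0}_{\mathsf b \mathsf a}(f_l)$. I would then invoke reciprocity: at a fixed frequency and a fixed physical configuration (here, a fixed mirror state), the propagation phase is identical in both directions, so $\varphi^{i}_{\mathsf a \mathsf b}(f_l) = \varphi^{i}_{\mathsf b \mathsf a}(f_l)$ and $\varphi^{i_0}_{\mathsf a \mathsf b}(f_l) = \varphi^{i_0}_{\mathsf b \mathsf a}(f_l)$; combining with the two displayed identities gives the claim $\big(\mathsf{ab}^i(f_l) \ominus \mathsf{ab}^{i_0}(f_l)\big) = \big(\mathsf{ba}^i(f_l) \ominus \mathsf{ba}^{i_0}(f_l)\big)$.

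The main obstacle is not the algebra but justifying the two modelling hypotheses that make the cancellation valid: (i) that the device-dependent offsets $\theta^{\mathrm{tx}}_{\mathsf a}, \theta^{\mathrm{rx}}_{\mathsf b}$ (and the analogous pair for the reverse link) stay constant over one key-sharing cycle — this is exactly why the cycle must fit inside the channel and oscillator coherence time, and it motivates the cyclic-prefix and switching-timing discipline ($\Delta t > \tau_{max}$, switching early in the prefix) described above; and (ii) that returning Alice's mirrors to the reference configuration $i_0$ truly reproduces the same near-field geometry as in the first transmission, so that $\varphi^{i_0}$ is well defined and shared by both directions. I would also note that the identity holds only up to the small independent measurement-noise terms at Alice and Bob (and any residual timing mismatch, which the four-antenna variant is designed to eliminate); these are precisely the deviations that the subsequent FEC stage corrects, so I would state the claim as an equality of the noiseless phase quantities and remark that the residual error is handled downstream.
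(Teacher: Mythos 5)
Your proof is correct and follows essentially the same route as the paper's: decompose each recorded phase into a propagation term plus transmit/receive-chain terms, observe that differencing against the reference state $i_0$ within one cycle cancels the hardware terms (since they are constant over consecutive transmissions), and then invoke channel reciprocity at each fixed mirror state and frequency. Your explicit additive decomposition merely makes precise what the paper argues verbally — in particular, it correctly restricts the reciprocity assertion to the over-the-air phases $\varphi^{i}_{\mathsf{a}\mathsf{b}}(f_l)=\varphi^{i}_{\mathsf{b}\mathsf{a}}(f_l)$ rather than to the raw recorded quantities — so no further comparison is needed.
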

\begin{proof}
First note that the change in the phase of an RF signal from the transmitter to receiver is affected by the signal traversing through: (i) the RF channel between the nodes, and (ii) both transmit and receive chains.
For Alice and Bob to obtain a shared phase value based on their recorded phases in the above four transmissions, the effect of phases induced by receive/transmit chains must be taken into account.
To shed more light, when Alice is the transmitter, the transmit chain of Alice and the receive chain of Bob contribute to the phase change of the RF signal. On the other hand, when Bob is the transmitter, the transmit chain of Bob and the receive chain of Alice contribute to the phase change. In the following, we will elaborate on how the phases induced by receive/transmit chains are mitigated in our protocol.

When Bob calculates $\big(\mathsf{ab}^i(f_l) \big) \ominus \big(\mathsf{ab}^{i_0}(f_l)\big)$ in the \textit{Second transmission}, the phase changes induced by Alice's transmit chain and Bob's receive chain are cancelled out. Similarly, when Alice calculates $\big(\mathsf{ba}^i(f_l) \big) \ominus \big(\mathsf{ba}^{i_0}(f_l)\big)$ in the \textit{Forth transmission}, the phase changes induced by Bob's transmit chain and Alice's receive chain are cancelled out. On the other hand, the reciprocity of the channel necessitates that $\mathsf{ab}^i(f_l)= \mathsf{ba}^i(f_l)$ and $\mathsf{ab}^{i_0}(f_l)= \mathsf{ba}^{i_0}(f_l)$; therefore $\big(\mathsf{ab}^i(f_l) \big) \ominus \big(\mathsf{ab}^{i_0}(f_l)\big)=\big(\mathsf{ba}^i(f_l) \big) \ominus \big(\mathsf{ba}^{i_0}(f_l)\big)$ which could be used as a common shared phase value (note that as the time interval between two OFDM symbols is very small, the hardware physical conditions, such as its temperature, do not notably change. Thus, the hardware behaviour---in terms of the phase change it exerts over the signal---remains the same between two consecutive transmissions).
\end{proof}

In the following, to suppress the effect of Additive White Gaussian Noise (AWGN), we propose obtaining only one common phase value from the $S$ phases shared in the above manner (in lieu of using all of the $S$ common phases as shared values).   

First, at the transmit side, the pilots' amplitudes are selected to be $\pm 1$ (there are $S$ pilots in $S$ frequency bands). Then, at the receive side, by applying appropriate sign changes, the receiver coherently adds up the pilots in the $S$ frequency bands, and then computes their average. Then, the phase of the averaged pilots will be used as the only shared phase value. In other words, Bob computes $\mathsf{ab}^i$ and  $\mathsf{ab}^{i_0}$ as the phase of the averaged pilots in the \textit{First} and \textit{Second transmissions}, respectively. Similarly, Alice finds $\mathsf{ba}^i$ and  $\mathsf{ba}^{i_0}$ as the phase of the averaged pilots in the \textit{Third} and \textit{Forth transmissions}, respectively. Thereafter, Bob and Alice use $ \left( \mathsf{ab}^i \ominus \mathsf{ab}^{i_0} \right) $ and $\left( \mathsf{ba}^i \ominus \mathsf{ba}^{i_0}\right)$ as a shared phase value to establish a common key. 
\begin{remark}
	Instead of simple averaging at the receiver, a more complex coding technique could be applied over the pilots received in $S$ frequency bands. Nevertheless, averaging the pilots can be regarded as a repetition code in frequency domain. Note that spectral efficiency is a secondary concern in this paper because a key can be used over many transmissions before it is changed. 
\end{remark}
\begin{remark}
	As discussed earlier, the length of the cyclic prefix $\Delta t$ is determined by the delay spread of the channel. Denote by $\Delta T$ the time length during which one node sends an RF signal to the other node, as depicted in Fig.\,\ref{fig:Eve-2-antennas}. Assuming a fixed available bandwidth, if the number of frequency bands $S$ decreases, $\Delta T$ also decreases. This, in turn, decreases the efficiency of the OFDM, as for a fixed-length cyclic prefix, $\Delta T$ has decreased. On the other hand, if $S$ increases, $\Delta T$ also increases (assuming a fixed available bandwidth) yielding a higher efficiency. Nevertheless, the number of sub-bands could be increased to the extent that the time interval during which one node completes its two consecutive transmissions, i.e., $2 \left(\Delta t +\Delta T \right)$, falls below the coherence time of the channel. Therefore, the number of sub-bands $S$ should be cautiously selected. 
\end{remark}

\subsubsection{Analysis} 
Next, we show that a potential eavesdropper equipped with $n$ receive antennas cannot attain the secret key shared by these two legitimate nodes. 

We assume Eve is aware of the protocol of key sharing used by the legitimate nodes, and therefore she finds the average of her observations over the utilized $S$ frequency bands. The observations made by Eve upon one cycle of key sharing between Alice and Bob are as follows:
\begin{itemize}
	\item{$\mathsf{a}$ $\to$ $\mathsf{b}$, when mirrors are at state $i_{0}$: Eve measures $\mathsf{ae}^{i_0}_{k}$},
	\item{$\mathsf{a}$ $\to$ $\mathsf{b}$, when mirrors are at state $i$: Eve measures $\mathsf{ae}^{i}_{k}$},
	\item{$\mathsf{b}$ $\to$ $\mathsf{a}$, when mirrors are at state $i_{0}$: Eve measures $\mathsf{be}_{k}$},
	\item{$\mathsf{b}$ $\to$ $\mathsf{a}$, when mirrors are at state $i$: Eve measures $\mathsf{be}_{k}$},
\end{itemize}
for $1 \leq k \leq n$.
In the following paragraphs, we show that based on these observations, Eve cannot extract any information about the shared phase, i.e., $\mathsf{ab}^i \ominus \mathsf{ab}^{i_0}$. First, we present a Lemma.
\begin{lemma} 
	$I \left( \mathsf{ae}_{k}^{i};\left( \mathsf{ab}^{i} \ominus \mathsf{ab}^{i_0} \right)| \mathsf{ae}_{k}^{i_0} \right)=0$.
	\label{I_lemma}
\end{lemma}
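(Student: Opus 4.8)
The plan is to show that conditioned on Eve's observation $\mathsf{ae}_k^{i_0}$, the shared phase $\mathsf{ab}^i \ominus \mathsf{ab}^{i_0}$ remains uniformly distributed on $[0,2\pi)$ and is independent of $\mathsf{ae}_k^i$, which immediately gives the claimed zero conditional mutual information. First I would decompose each measured phase into its constituent contributions: the RF-channel phase plus fixed transmit/receive chain phases. Writing $\mathsf{ab}^i = \theta^i_{\mathsf{ab}} \oplus c$ where $\theta^i_{\mathsf{ab}}$ is the propagation phase induced by mirror state $i$ on the Alice$\to$Bob link and $c$ collects the (state-independent) chain phases, and similarly $\mathsf{ae}_k^i = \theta^i_{\mathsf{ae}_k} \oplus d_k$, the key structural facts from Section \ref{RF-mirrors} are that the propagation phases corresponding to distinct mirror states are mutually independent and each is uniformly distributed on $[0,2\pi)$, and that the Alice$\to$Bob propagation phase is independent of the Alice$\to$Eve propagation phase (different physical paths to different receivers). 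Since chain phases cancel in the difference, $\mathsf{ab}^i \ominus \mathsf{ab}^{i_0} = \theta^i_{\mathsf{ab}} \ominus \theta^{i_0}_{\mathsf{ab}}$.

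Next I would argue the independence structure explicitly. The four relevant random variables are $\theta^{i_0}_{\mathsf{ab}}, \theta^{i}_{\mathsf{ab}}, \theta^{i_0}_{\mathsf{ae}_k}, \theta^{i}_{\mathsf{ae}_k}$. Conditioning on $\mathsf{ae}_k^{i_0}$ (equivalently on $\theta^{i_0}_{\mathsf{ae}_k}$) tells us nothing about the Alice$\to$Bob phases, so the joint conditional law of $(\theta^{i_0}_{\mathsf{ab}}, \theta^{i}_{\mathsf{ab}})$ is still that of two independent uniform angles; moreover $\theta^{i}_{\mathsf{ae}_k}$ is independent of both of them. Now apply Theorem \ref{uniform}: since $\theta^{i}_{\mathsf{ab}}$ is uniform on $[0,2\pi)$ and independent of $\theta^{i_0}_{\mathsf{ab}}$, the modulo-$2\pi$ difference $\theta^{i}_{\mathsf{ab}} \ominus \theta^{i_0}_{\mathsf{ab}}$ is uniform on $[0,2\pi)$. (A modulo-$2\pi$ subtraction is just modulo-$2\pi$ addition with $2\pi - \theta^{i_0}_{\mathsf{ab}}$, which is again an independent random angle, so the theorem applies verbatim.) The same reasoning shows $\theta^{i}_{\mathsf{ab}} \ominus \theta^{i_0}_{\mathsf{ab}}$ remains uniform even after additionally conditioning on $\theta^{i}_{\mathsf{ae}_k}$, because that extra conditioning does not disturb the uniformity of $\theta^{i}_{\mathsf{ab}}$ or its independence from the rest.

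To finish, I would write the conditional mutual information as
\begin{align*}
I\!\left(\mathsf{ae}_k^i; \mathsf{ab}^i \ominus \mathsf{ab}^{i_0} \,\middle|\, \mathsf{ae}_k^{i_0}\right)
= H\!\left(\mathsf{ab}^i \ominus \mathsf{ab}^{i_0} \,\middle|\, \mathsf{ae}_k^{i_0}\right)
- H\!\left(\mathsf{ab}^i \ominus \mathsf{ab}^{i_0} \,\middle|\, \mathsf{ae}_k^{i_0}, \mathsf{ae}_k^i\right),
\end{align*}
and observe that both terms equal the differential entropy of a uniform distribution on $[0,2\pi)$, namely $\log 2\pi$, so the difference is zero. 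The main obstacle, and the part deserving the most care, is the modeling step: justifying rigorously that the propagation phases for distinct mirror states and for distinct receivers are genuinely independent and uniform — i.e., that Eve's observation $\mathsf{ae}_k^{i_0}$ carries no information about the Alice$\to$Bob link. This rests on the physical rich-scattering argument of Section \ref{RF-mirrors} rather than on pure mathematics, so I would state it as an explicit modeling assumption and then let Theorem \ref{uniform} do the rest of the work mechanically. A secondary subtlety is handling the measurement noise (the $S$-tone averaging) — but since averaging a noisy pilot and then taking its phase still yields a phase that is uniform and independent across states under the same modeling assumptions, it does not change the argument.
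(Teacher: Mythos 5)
Your argument is correct under the same physical modeling assumptions the paper itself invokes, but it reaches the conclusion by a genuinely different route. The paper's proof never uses uniformity at all: it expands the conditional mutual information from its definition, applies Bayes' rule to the conditional joint density, and then uses the independence relations \eqref{ind1}--\eqref{ind3} (mirror-state independence of $\mathsf{ae}_k^{i}$ and $\mathsf{ae}_k^{i_0}$, and independence of each of Eve's observations from $\mathsf{ab}^{i_0}\ominus\mathsf{ab}^{i}$) to factor that density, so the integrand of \eqref{I1} collapses to $\log 1=0$. You instead strip out the chain phases, invoke Theorem \ref{uniform} to conclude that $\mathsf{ab}^{i}\ominus\mathsf{ab}^{i_0}$ is uniform on $[0,2\pi)$, and compute $I=H\left(\mathsf{ab}^{i}\ominus\mathsf{ab}^{i_0}\,\middle|\,\mathsf{ae}_k^{i_0}\right)-H\left(\mathsf{ab}^{i}\ominus\mathsf{ab}^{i_0}\,\middle|\,\mathsf{ae}_k^{i_0},\mathsf{ae}_k^{i}\right)=\log 2\pi-\log 2\pi=0$. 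The trade-off: the paper's factorization argument is more economical, needing only independence and no distributional assumption on the phases, whereas your entropy route additionally requires uniformity of at least one of $\mathsf{ab}^{i},\mathsf{ab}^{i_0}$; in exchange, your route delivers a slightly stronger and more transparent conclusion (the conditional law of the shared phase given Eve's data is exactly uniform, i.e., perfect masking rather than merely zero mutual information), and by stating the independence of the Alice-to-Bob propagation phases from Eve's observations as a single joint modeling assumption you avoid a small subtlety in the paper's own derivation, where the pairwise relations \eqref{ind1}--\eqref{ind3} are used as if they implied the joint factorization needed in \eqref{ind4}--\eqref{ind5}. Your closing remarks on modulo-$2\pi$ subtraction being an instance of Theorem \ref{uniform} and on the $S$-tone averaging not disturbing the argument are also sound.
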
 
\begin{proof}
	By the definition of conditional mutual information \cite{cover1999elements}, we have 
	\begin{align} \label{I1}
		I \left( \mathsf{ae}_{k}^{i};\left( \mathsf{ab}^{i_0} \ominus \mathsf{ab}^{i} \right)| \mathsf{ae}_{k}^{i_0} \right)
		=E_{p\left( \mathsf{ae}_{k}^{i},\left(\mathsf{ab}^{{i_0}} \ominus \mathsf{ab}^{j}\right)| \mathsf{ae}_{k}^{{i_0}}\right)}\left( \log \frac{p\left(\mathsf{ae}_{k}^{i},\left(\mathsf{ab}^{i_0} \ominus \mathsf{ab}^{i}\right)| \mathsf{ae}_{k}^{i_0} \right)}{p\left(\mathsf{ae}_{k}^{i}|\mathsf{ae}_{k}^{i_0}\right)p\left(\left( \mathsf{ab}^{i_0} \ominus \mathsf{ab}^{i}\right) |\mathsf{ae}_{k}^{i_0}\right)}\right)
	\end{align}
	where $E_{p\left(.\right)}\{.\}$ represents  the expected value over the probability space of $p\left(.\right)$. On the other hand, by Bayes' formula, we have
	\begin{align} \label{bayes'}
		p\left(\mathsf{ae}_{k}^{i},\left(\mathsf{ab}^{i_0} \ominus \mathsf{ab}^{i}\right)| \mathsf{ae}_{k}^{i_0} \right)
		=\frac{p\left(\mathsf{ae}_{k}^{i}|\left(\mathsf{ab}^{i_0} \ominus \mathsf{ab}^{i}\right),\mathsf{ae}_{k}^{i_0} \right)p\left( \mathsf{ae}_{k}^{i_0}, \left(\mathsf{ab}^{i_0} \ominus \mathsf{ab}^{i}\right)\right)}{p\left( \mathsf{ae}_{k}^{i_0} \right)}.
	\end{align}
	To simplify the terms in \eqref{bayes'}, it must be noted that by changing the state of the mirrors at the Alice's side, the RF environment in the vicinity of the Alice antenna $\mathsf{a}$ is changed (in effect, it changes the antenna pattern). As mentioned earlier, such a change in the vicinity of the transmit antenna changes the propagation path to the receiver antenna, further enhancing the randomness in the end-to-end channel phase. In other words, the randomness caused by RF mirrors will be augmented when the signal propagates to the destination and thereby interacts with randomly located objects in its propagation path. Interactions can be in the form of partial absorption/reflection which causes a phase shift. Numerous such signal paths, each experiencing a random phase shift, will add up at the destination antenna. In other words, the environment external to the transmit antenna is typically a rich scattering environment which enhances the randomness initially caused by the change in the Alice's antenna pattern using RF mirrors.
	Therefore, the phases that Eve records from Alice's antenna when the mirrors are at state $i$, i.e., $\mathsf{ae}_{k}^{i}$, are independent from the those she records when mirrors are in state ${i_0}$, i.e., $\mathsf{ae}_{k}^{i_0}$, meaning that
	\begin{align}
		\mathsf{ae}_{k}^{i} \perp \mathsf{ae}_{k}^{i_0} \label{ind1}.
	\end{align}
	
	Furthermore, it is obvious that as the location of Eve and Bob is different, $\mathsf{ae}_{k}^{i} \perp\mathsf{ab}^{i}$, and also $\mathsf{ae}_{k}^{i} \perp \mathsf{ab}^{i_0}$, which yields 
	\begin{align} 
		\mathsf{ae}_{k}^{i} \perp \left( \mathsf{ab}^{i_0} \ominus \mathsf{ab}^{i}\right) \label{ind2}.
	\end{align}
	
	In addition,  $\mathsf{ae}_{k}^{i_0} \perp\mathsf{ab}^{i_0}$ and $\mathsf{ae}_{k}^{i_0} \perp \mathsf{ab}^{i}$, which results
	\begin{align} 
		\mathsf{ae}_{k}^{i_0} \perp \left( \mathsf{ab}^{i_0} \ominus \mathsf{ab}^{i}\right) \label{ind3}.
	\end{align}
	
	From equations \eqref{ind1} and \eqref{ind2}
	\begin{align}
		&p\left(\mathsf{ae}_{k}^{i}|\left(\mathsf{ab}^{i_0} \ominus \mathsf{ab}^{i}\right),\mathsf{ae}_{k}^{i_0} \right)=p\left(\mathsf{ae}_{k}^{i} \right)\label{ind4}
	\end{align}
	and from \eqref{ind3} it is concluded 
	\begin{align}
		&p\left( \mathsf{ae}_{k}^{i_0}, (\mathsf{ab}^{i_0} \ominus \mathsf{ab}^{i})\right)=p\left( \mathsf{ae}_{k}^{i_0}\right)p\left( \mathsf{ab}^{i_0} \ominus \mathsf{ab}^{i}\right) \label{ind5}.
	\end{align}
	Substituting equations \eqref{ind4} and \eqref{ind5} in \eqref{bayes'}, results in
	\begin{align}
		p\left(\mathsf{ae}_{k}^{i},\left(\mathsf{ab}^{i_0} \ominus \mathsf{ab}^{i}\right)| \mathsf{ae}_{k}^{i_0} \right)=p\left(\mathsf{ae}_{k}^{i} \right) p\left( \mathsf{ab}^{i_0} \ominus \mathsf{ab}^{i} \right).
	\end{align}
	Therefore, equation \eqref{I1} is simplified to 
	\begin{align} \label{i=0}
		 I \left(\mathsf{ae}_{k}^{i};\left(\mathsf{ab}^{i_0} \ominus \mathsf{ab}^{i}\right)| \mathsf{ae}_{k}^{i_0} \right) 
		&=E_{p\left( \mathsf{ae}_{k}^{i},\left(\mathsf{ab}^{i_0} \ominus \mathsf{ab}^{i}\right)| \mathsf{ae}_{k}^{i_0}\right)} \left( \log \frac{p\left(\mathsf{ae}_{k}^{i} \right) p\left(\mathsf{ab}^{i_0} \ominus \mathsf{ab}^{i} \right)}{p\left(\mathsf{ae}_{k}^{i} \right) p\left( \mathsf{ab}^{i_0} \ominus \mathsf{ab}^{i} \right)}\right) \nonumber \\
		&=E_{p( \mathsf{ae}_{k}^{i},(\mathsf{ab}^{i_0} \ominus \mathsf{ab}^{i})| \mathsf{ae}_{k}^{i_0})} \log \left( 1\right)=0.
	\end{align}
\end{proof}
Note that, Lemma \ref{I_lemma} is focused on a single antenna of Eve, namely its $k^{th}$ antenna. To find the information captured by the other Eve's antennas, if $\mathsf{ae}_{k}^{i_0}$ is replaced by $\mathsf{ae}_{l}^{i_0}$ for $l\neq k $ in Lemma \ref{I_lemma}, using the same reasoning as that discussed above, it follows that all the equations \eqref{I1} to \eqref{i=0} are still correct. Therefore,
\begin{align} \label{ineqj}
	I \left(\mathsf{ae}_{k}^{i};\left(\mathsf{ab}^{i_0} \ominus \mathsf{ab}^{i}\right)| \mathsf{ae}_{l}^{i_0} \right)=0.
\end{align}
Since $k$ and $l$ are arbitrary in \eqref{ineqj}, we have
\begin{align} \label{i_tuple}
	I \left([\mathbf{ae}^{i}];\left(\mathsf{ab}^{i_0} \ominus \mathsf{ab}^{i}\right)| [\mathbf{ae}^{i_0}] \right)=0
\end{align}
where $[\mathbf{ae}^{i}]=\{\mathsf{ae}_{1}^{i},\mathsf{ae}_{2}^{i},...,\mathsf{ae}_{n}^{i}\}$, and similarly $[\mathbf{ae}^{i_0}]=\{\mathsf{ae}_{1}^{i_0},\mathsf{ae}_{2}^{i_0},...,\mathsf{ae}_{n}^{i_0}\}$ are the vectors of phases that Eve measures on all of its antennas from the Alice's antenna when the mirrors are in states $i$ and $i_0$, respectively.

Thus far, we have considered the Eve's observations from Alice's antenna, but not those from Bob's. In the following we inspect the Eve's observations from Bob's antenna. 
Using the same method and justification as those used in proving Lemma \ref{I_lemma}, it could be proven that $I \left(\mathsf{be}_{k};\left(\mathsf{ab}^{i} \ominus \mathsf{ab}^{i_0}\right)| \mathsf{be}_{k}\right)=0$. Furthermore, with the same reasoning as that explained in proving \eqref{i_tuple}, it is straight-forward to prove that 
\begin{align} \label{i_tuple2}
	I \left([\mathbf{be}];\left(\mathsf{ab}^{i_0} \ominus \mathsf{ab}^{i}\right)| [\mathbf{be}] \right)=0
\end{align}
where $[\mathbf{be}]=\{\mathsf{be}_{1},\mathsf{be}_{2},...,\mathsf{be}_{n}\}$ is the vector of phases that Eve receives on all of its antennas from $\mathsf{b}$ (Bob's antenna), no matter what the state of the mirrors is. 

Furthermore, since $\mathsf{ae}_{k}^{i} \perp \mathsf{be}_{k}$, all the equations \eqref{I1} to \eqref{i=0} are still correct. Thus, 
\begin{align} \label{i_tuple3}
	I \left([\mathbf{ae}^{i}];\left(\mathsf{ab}^{i_0} \ominus \mathsf{ab}^{i}\right)| [\mathbf{be}] \right)=0
\end{align}
and similarly,
\begin{align} \label{i_tuple4}
	I \left([\mathbf{be}];\left(\mathsf{ab}^{i_0} \ominus \mathsf{ab}^{i}\right)| [\mathbf{ae}^{i_0}] \right)=0.
\end{align}

From equations \eqref{i_tuple} to \eqref{i_tuple4}, it is obtained that
\begin{align} \label{final_I}
	I \left([\mathbf{ae}^{i},\mathbf{be}];\left(\mathsf{ab}^{i_0} \ominus \mathsf{ab}^{i}\right)|[\mathbf{ae}^{i_0},\mathbf{be}] \right)=0
\end{align}
where $[\mathbf{ae}^{i},\mathbf{be}]$ and $[\mathbf{ae}^{i_0},\mathbf{be}]$ are all the phases that Eve receives from both $a$ and $b$, i.e., Alice's and Bob's antennas, when the mirrors are at state $i$ and $i_0$, respectively.
Equation \eqref{final_I} states that even if Eve records all the phases received from the legitimate nodes during the four transmissions of a full cycle, still she cannot attain any information about the shared key.

\subsection{Four-antenna System} \label{Sec:four}
While having a simple structure, the downside of two-antenna system is the issue of synchronization between the legitimate nodes in both time and frequency domains. Time mismatch causes the receiver to inaccurately sample the received OFDM symbols, diminishing the effect of cyclic prefix. On the other hand, unmatched frequencies between the local oscillators of the transmitter and receiver results in inter-carrier-interference.  

To tackle this issue, four-antenna system is introduced in this section in which each of the two legitimate nodes is equipped with two antennas. As will be elaborated in the following, in such a system, the change in the phase of the signal is measured between the two antennas located on the same unit, and consequently, the problem of time and frequency mismatch is not of any concern. 


\subsubsection{Notations and Assumptions} 
We use the same notations and assumptions as those used for two-antenna system. However, as each node has an additional antenna with respect to the two-antenna system, we indicate the new antennas on Alice's and Bob's sides by $\alpha$ and $\beta$, respectively. In other words, we denote by $\{\mathsf{a},\mathsf{\alpha}\}$ and $\{\mathsf{b},\mathsf{\beta}\}$ the antennas for Alice and Bob, respectively. Note that similar to the two-antenna system, the RF mirror structure is only utilized on only of the two sides, say on Alice's side. The system structure along with the channels between the legitimate nodes and Eve are depicted in Fig.\,\ref{fig:Eve-4-antenna}.

Furthermore, similar to the two-antenna system, the signals are transmitted using OFDM over $S$ frequency bands, and by means of averaging, the effect of the noise will be reduced. For the sake of notational simplicity, although the averaging over $S$ frequency bands takes place, the process of averaging is not expressed explicitly. 
\begin{figure}[h!] 
	\includegraphics[width=.5\textwidth,center]{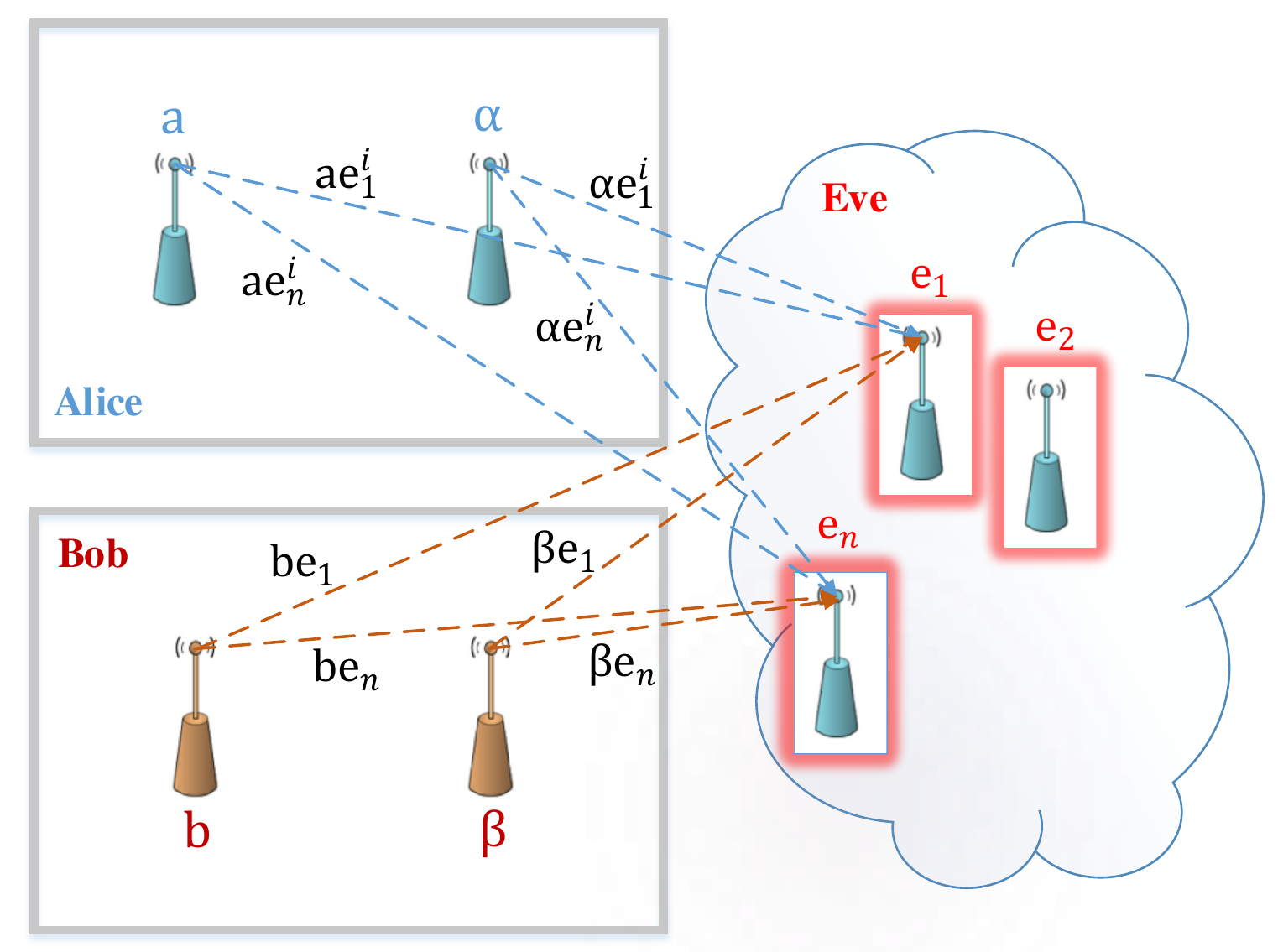}
	\caption{Eve's observations in four-antenna system when the Alice's mirrors are at the state $i$. Eve's observations from Bob's antennas are independent of the mirror's state.}
	\label{fig:Eve-4-antenna}
\end{figure}
\subsubsection{The protocol of Key sharing} 
As indicated in Fig.\,\ref{fig:4-antenna-config}, two traversing loops, one initiated by Alice and the other one initiated by Bob, will be completed to share a common phase value. 

It should be pointed out that exchanging the signal between the two antennas on the same side (i.e., transmissions between $\mathsf{a}$ and $\alpha$, or between $\mathsf{b}$ and $\beta$) takes place through the RF front-end (wired connection). The propagation of RF signal within the RF front-end of each unit results in a phase shift which is virtually constant over time. Note that variations in such interior-to-hardware phase shifts are caused by effects such as changes in temperature or aging of components which occurs very slowly over time. Consequently, the net effect is a constant phase difference between the phase shifts measured at the two legitimate nodes. Such a constant phase difference can be compensated through an initial calibration. An alternative is to estimate the constant phase difference as part of the maximum likelihood or soft-output decoding of the underlying forward error correcting code. Therefore, by compensating the constant phase difference, hereafter we assume that $\alpha\mathsf{a}=\mathsf{b}\beta$. 

\begin{figure}
	\includegraphics[width=.5\textwidth,center]{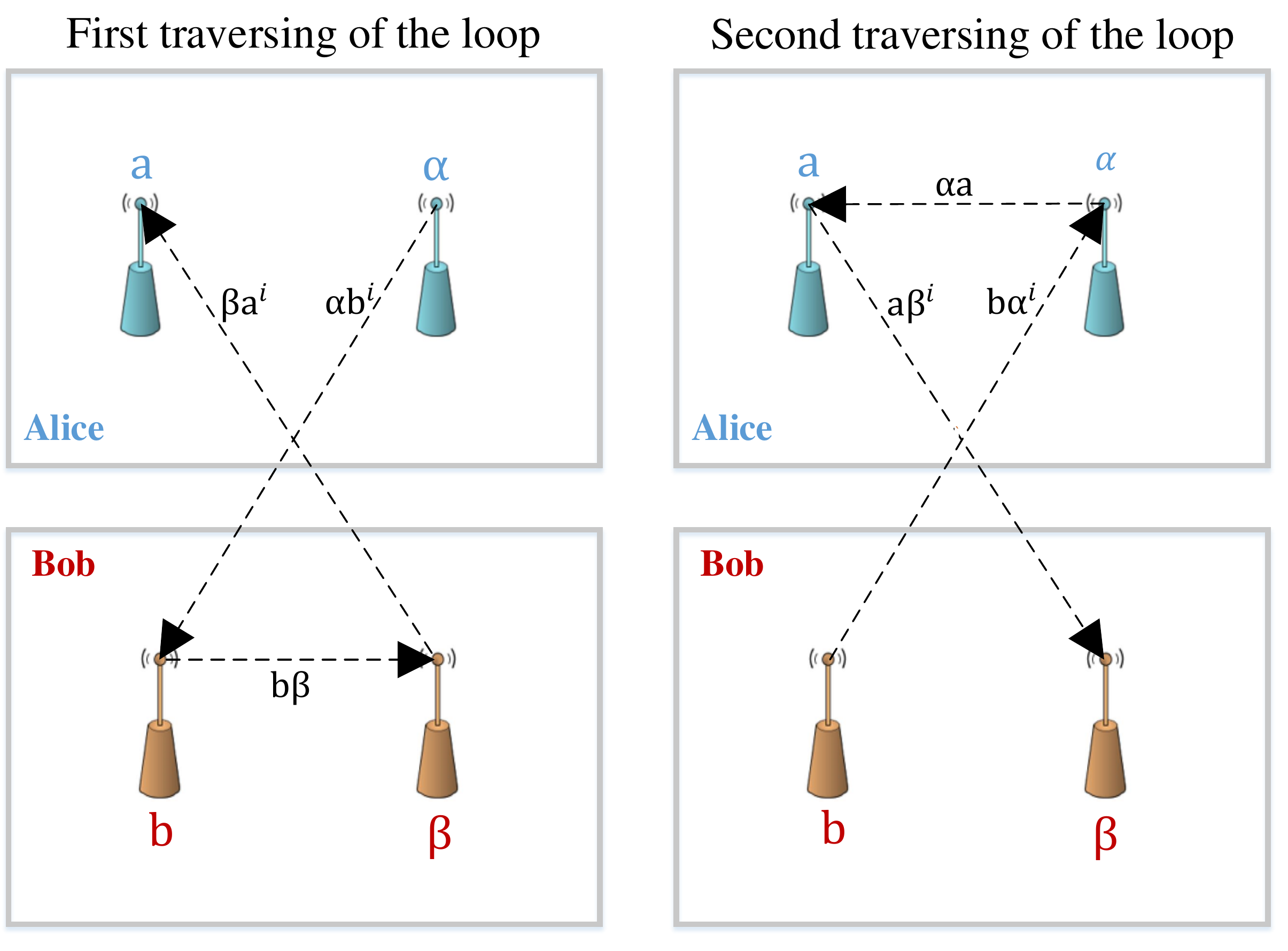}
	\caption{Traversing loops in four-antenna system. Note that $\alpha\mathsf{a}=\mathsf{b}\beta$ are independent of the mirror's state.}
	\label{fig:4-antenna-config}
\end{figure}
In the first traversing loop, Alice selects its mirrors at a random state $i$. Then, a pilot with random phase $\theta^i$, uniformly distributed in $[0, 2\pi)$, is sent from Alice to traverse the following loop: $\alpha \to \mathsf{b} \to \beta \to \mathsf{a}$ (note that Alice records $\theta^i$). This is referred to as the first traversing of the loop. Once the first loop is completed, Alice measures the change in the phase of the received signal with respect to that of original preamble, and records $\alpha\mathsf{b}^i \oplus \mathsf{b}\beta \oplus \beta\mathsf{a}^i$. Note that the initial random phase $\theta^i$ is canceled, and that $\mathsf{b}\beta$ is independent of the mirror state (as it is caused via an internal connection). Thereafter, the same loop initiated by Bob with random phase $\phi^i$, uniformly distributed in $[0, 2\pi)$, is traversed in a different direction, namely, $\mathsf{b} \to \alpha \to \mathsf{a} \to \beta$. This is referred to as the second traversing of the loop. Similarly, Bob measures the phase shift with respect to its original pilot to obtain $\mathsf{b}\alpha ^i \oplus \alpha\mathsf{a} \oplus \mathsf{a}\beta^i$. Also, as discussed earlier,  $\alpha\mathsf{a}=\mathsf{b}\beta$. The reciprocity of the channel implies that $\alpha\mathsf{b}^i=\mathsf{b}\alpha^i$ and $\beta\mathsf{a}^i=\mathsf{a}\beta^i$. Hence, the two legitimate nodes have measured the same phase shift which would be used as a shared phase value. To generate a new key, Alice changes the state of its mirrors to a new random state $j$ and again two traversing loops will be completed to establish a new shared phase value.   
\subsubsection{Analysis} 
Now, we show that an eavesdropper equipped with $n$ antennas is not capable of attaining the shared key between Alice and Bob.

	First, we find out what information Eve can obtain by its $k^{th}$ antenna when both Alice and Bob complete their traversing loops. It is assumed that Alice mirrors are in state $i$. 
	\\
	First traverse of the loop: 
	\begin{itemize}
		\item{$\alpha \to \mathsf{b}$: Eve measures  $\alpha\mathsf{e}_k^i \oplus \theta^i$, \hfill  (i)}
		\item{$\beta  \to \mathsf{a}$: Eve measures $\alpha\mathsf{b}^i \oplus \mathsf{b}\beta  \oplus \beta\mathsf{e}_k \oplus \theta^i$, \hfill  (ii)}
	\end{itemize}
	Second traverse of the loop: 
	\begin{itemize}
		\item{$\mathsf{b} \to \alpha$: Eve measures $\mathsf{be}_k \oplus \phi^i$,  \hfill  (iii)}
		\item{$\mathsf{a} \to \beta$: Eve measures $\mathsf{b}\alpha ^i \oplus \alpha\mathsf{a} \oplus \mathsf{ae}_k^i \oplus \phi^i$,  \hfill  (iv)}
	\end{itemize}
	for $1 \leq k \leq n$. Eve's observations obtained over its different antennas are independent of each other, and therefore they could be considered separately. Hence, we remove the Eve's antenna indexes in our analysis, i.e., we consider a single antenna at the Eve's side. Referring to the observations (i)-(iv), we express the summation of initial random phases with the phase of the channel between Eve and legitimate users as $\mathsf{m}_1^i = \alpha\mathsf{e}^i \oplus \theta^i$, $\mathsf{m}_2^i = \beta\mathsf{e} \oplus \theta^i$, $\mathsf{m}_3^i = \mathsf{be} \oplus \phi^i$, and $\mathsf{m}_3^i = \mathsf{ae}^i  \oplus \phi^i$.
	Based on Theorem \ref{uniform}, it is concluded that since both $\theta^i$ and $\phi^i$ are uniformly distributed in $[0, 2\pi)$, the phase values $\{\mathsf{m}_1^i,\mathsf{m}_2^i,\mathsf{m}_3^i,\mathsf{m}_4^i\}$ are also uniformly distributed in $[0, 2\pi)$. Furthermore, we assume that Eve is aware of the antenna structure utilized by both legitimate nodes, and thus she knows the value of $\alpha\mathsf{a}=\mathsf{b}\beta$. Therefore, Eve forms the following system of equations to find the value of $\alpha\mathsf{b}^i$.
	\begin{align} \label{sys_simple}
		& \alpha\mathsf{b}^i  \oplus \mathsf{m}_2^i=y_1^i \nonumber \\
		& \mathsf{b}\alpha ^i \oplus \mathsf{m}_4^i=y_2^i.
	\end{align}
	
	The system of equations in \eqref{sys_simple} has three unknowns---note that because of channel reciprocity $\alpha\mathsf{b}^i=\mathsf{b}\alpha^i$---and two equations; hence, it is under-determined \cite{datta2010numerical}. It is straight-forward that a system of equations of the form \eqref{sys_simple} (where the variables are in $[0, 2\pi)$ and the summations are modulo-$2\pi$) can be solved as follows: first, the system is solved as an ordinary system of linear equations, and second, the modulo-$2\pi$ of the unknown variables found in the first step will be the final solution for the original system. On the other hand, the coefficients of all the unknown variables in \eqref{sys_simple} are one, and in order for Eve to solve such a system, she needs to replace an unknown variable from one equation into the other equation. This will give Eve the summation of two unknown variables both having coefficient one. Thus, based on Theorem 1, Eve will face a phase ambiguity uniformly distributed in $[0, 2\pi)$ when determining any of unknown variables in \eqref{sys_simple}. This essentially means that Eve cannot extract any information about the shared key based on her observations.   
\section{Modulation} \label{modulation}
In this section, a simple modulation technique for a secure data exchange between legitimate users utilizing the proposed key sharing method is presented. Assume that $\Phi$ is a secret common phase value between the legitimate nodes that is obtained by one of the two systems proposed in this paper. Suppose we have $2^m$-PSK symbols as our plain text (which could represent the message, or a bit stream to be used as key). At the receiver, we rotate each $2^m$-PSK constellation point $X$ with the common phase value $\Phi$ which is equivalent to modulo-$2\pi$ addition of $X$ and $\Phi$:
$Y=X\oplus\Phi$. 	
Any eavesdropper who observes $Y$, cannot extract any information about $X$ because $\Phi$ is a random variable with uniform distribution in $[0,2\pi)$, and based on Theorem 1, $Y$ is also a random variable uniformly distributed  in $[0,2\pi)$ and independent from $X$, leading to $I(X;Y) = H(Y)-H(Y|X)=H(Y)-H(\Phi)=0$. At the receiver side, the legitimate receiver knowing the common phase value de-rotates each constellation point accordingly and recovers the message which is equivalent to modulo-$2\pi$ subtraction of $\Phi$ from $Y$ (not considering noise for simplicity of notation): $X=Y \ominus \Phi$.
This process allows legitimate parties to share $m$ secure bits per common phase value. In the following subsection, an example of key exchange when the legitimate users utilize Quadrature Phase Shift Keying (QPSK) is elaborated, and furthermore, the effect of FEC is discussed.

\subsection{Example for Key Exchange}
Assuming there are $\Theta$ shared phases, and the modulation for the transmission of key is QPSK, with a rate 1/2 FEC. One of the two nodes, acting as the master, generates $\Theta$ random bits (representing a messege to be securely communicated, or to be used as the key), passes these bits through FEC encoder to generate $2\Theta$ bits, maps the resulting $2\Theta$ bits to $\Theta$ QPSK constellations, rotates each constellation with one of the shared phase values, and transmits the rotated constellations to the other node. The receiving node will first de-rotate subsequent constellations with  its local copies of the corresponding phase values, and then demodulate the constellation and decode the FEC to extract the $\Theta$ bits of information. Note that, in this example, decoding the FEC is equivalent to decoding it in an AWGN channel. In addition, it is seen that the redundant bits added by using FEC are also transmitted by QPSK constellations. Since these constellations are rotated by one of the shared random phase values, they form an independent constellation stream at the receiver, each having a phase ambiguity uniformly distributed in $[0,2\pi)$. As discussed in Section \ref{secrecy}, this in turn concludes that the redundant bits---those added by FEC---do not disclose any information to Eve.  

\section {Empirical Results} \label{results}
\begin{figure} 
	\centering
	\subfloat[\label{warp}]{%
		\includegraphics[width=0.35\linewidth]{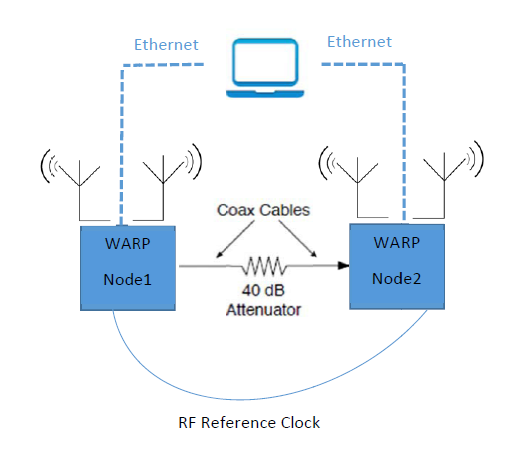}}
	\subfloat[\label{twophase2}]{%
		\includegraphics[width=0.4\linewidth]{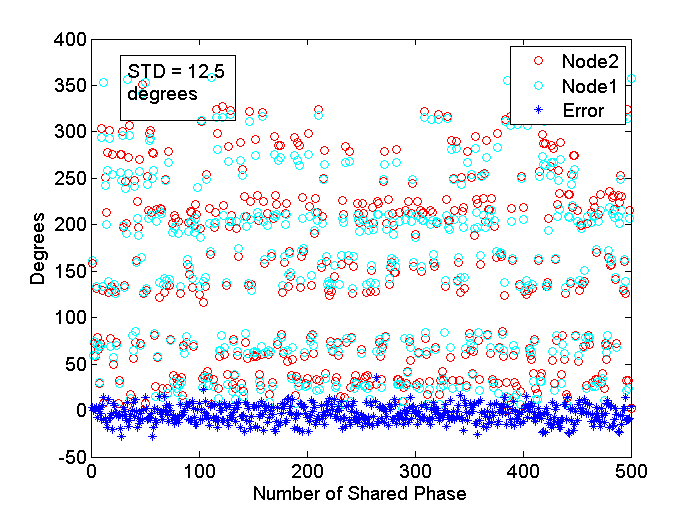}}
	\caption{(a) Setup for implementation on WARP platform. (b) Over-the-air system performance.}
\end{figure}
The structure is implemented on two WARP1 software radio platforms operating at a carrier of 2.4Ghz. Fig.\,\ref{warp} depicts the implementation setup block diagram. Each RF mirror is a set of 6 conductor patches connected to each other with high frequency PIN diodes acting as RF switches. If these PIN diodes are ON, the conductor patches are connected and act as a perfect conductor wall reflecting the incident wave, whereas if they are OFF, the disjoint conductor patches act as a set of parasitic elements, resonating with the antenna, and in turn radiate externally.  As mentioned earlier, due to space limitations, the two nodes were placed within the same room at a distance of a few meters. As a result, there is a LOS component which contributes to all channel states including the reference state, and makes the constellation points  spreading around a center point. LOS is a constant and predictable component, adding a bias to each constellation preventing the random phase values to be uniformly distributed over $[0,2\pi)$. To overcome this shortcoming, the bias due to the LOS component is subtracted from the measurements. In our experiments, transmit power is about 10dBm, which is partially wasted (absorbed) within the mirror enclosure prior to propagating externally. The variance of phase error could be reduced by using a higher transmit power, and/or by averaging over multiple transmissions. Phase error can be also reduced by discarding points with low signal magnitude, i.e., when the magnitude of the multi-path fading is too small, and/or a higher amount of energy is absorbed within the enclosure. Fig.\,\ref{twophase2} shows an example for over-the-air phase measurement in which LOS is removed and about $20\%$ of points of lowest energy are discarded.
 
\section{conclusion} \label{conclusion}
This article presents a practical method that exploits phase reciprocity in wireless channels for perfectly secure key exchange. The proof of perfect secrecy granted by the proposed method is provided. Also, two different antenna configurations with their respective protocols of key sharing are proposed.  Unlike known techniques for PLS, which have unsolved challenges in terms of removing possible errors between the shared keys (without disclosing information), the proposed structure relies on well established techniques in wireless transmission such as ``synchronization", ``channel phase measurement" and  ``Forward Error Correction''. The hardware components are simple, low cost and the rate of key generation is up to the limits governed by Nyquist sampling theorem. 

\bibliographystyle{IEEEtran}
\bibliography{refs}

\end{document}